     \let\MYoriglatexcaption\caption
     \renewcommand{\caption}[2][\relax]{\MYoriglatexcaption[#2]{#2}}
    \theoremstyle{definition}
    \DeclareMathOperator{\Expectation}{\mathbb{E}\xspace}
    \DeclareMathOperator{\Probability}{\mathbb{P}\xspace}
    \acrodef{IoT}{Internet of Things}
    \acrodef{RV}{random variable}
    \acrodef{RAO}{random access opportunity}
    \acrodef{CIoT}{cellular IoT}
    \acrodef{LTE}{Long Term Evolution}
    \acrodef{LTE-A}{LTE-Advanced}
    \acrodef{LTE-M}{LTE-MTC}
    \acrodef{NB-IoT}{Narrowband \ac{IoT}}
    \acrodef{NR}{new radio}
    \acrodef{mMTC}{massive machine-type communications}
    \acrodef{MIMO}{multiple-input multiple output}
    \acrodef{mMIMO}{massive \acl{MIMO}}
    \acrodef{NOMA}{non-orthogonal multiple access}
    \acrodef{MAC}{medium access control}
    \acrodef{MAD}{maximum average distance}
    \acrodef{ZC}{Zadoff–Chu}
    \acrodef{FDD}{frequency division duplexing}
    \acrodef{BS}{base station}
    \acrodef{5G-NR}{5G new radio}
    \acrodef{PRACH}{Physical Random Access Channel}
    \acrodef{RACH}{Random Access Channel}    \acrodef{RA}{random access}
    \newcommand{\RevOne}[1]{#1\xspace} %Version1 Revision after submission
    \newcommand{\MyCeRA}{OptCeRA\xspace}
    \newcommand{\CeRA}{CeRA\xspace}
    \newcommand{\CeRATwo}{Multipreamble RA\xspace}
    \newcommand{\nth}[1]{\textit{#1}th}
    \newcommand{\etal}{\mbox{\emph{et al.}}\xspace}
    \newcommand{\myiff}{if and only if\xspace} %
    \newcommand{\iid}{i.i.d.\xspace}  
    \newcommand{\Aqn}[2]{\mathcal{A}_{#1}^{#2}\xspace}
    \newcommand{\Aq}[1]{\mathcal{A}_{#1}\xspace}
    \newcommand{\xij}[2]{\pi_{#1}(#2)}
    \newcommand{\gformat}[1]{\mathcal{#1}}
    \newcommand{\HSymbol}{H}
    \newcommand{\V}{\gformat{X}}
    \newcommand{\E}{\gformat{E}}
    \newcommand{\Y}{\gformat{Y}}
    \newcommand{\HGraph}[1]{\HSymbol(\V,\E)}
    \newcommand{\Hypergraph}[2]{\HSymbol_{\code}(\V,\E)} %\code(#1,#2)}(\V,\E)
    \newcommand{\IGraph}[2]{\IHSymbol{#1}=(#1,#2)}
    \newcommand{\IHSymbol}[1]{\HSymbol_{#1}\xspace}
    \newcommand{\hd}[2]{H(#1,#2)\xspace}% Hamming distance
    \newcommand{\ahd}[1]{d_H(#1)\xspace}% Average Hamming Distance 
    \newcommand{\Bknp}[3]{\mathbb{B}_{#1}(#2,#3)\xspace}%{\rvX_{#1}(#2,#3)\xspace}
    \newcommand{\piij}{\xij{i}{j}\xspace}
    \newcommand{\code}{\mathcal{C}\xspace}
    \newcommand{\madcode}[2]{$(#1,#2{\cdot}k)_{#2}$} 
    \newcommand{\codeacr}{$(n,\M)_{q}$} 
    \newcommand{\alphabetElement}{alphabet element\xspace}
    \newcommand{\coordinate}{codeword coordinate\xspace}
    \newcommand{\qaryRepresentation}{\kb = \as_{\nc}\ql^{\nc-1}+\as_{\nc-1}\ql^{\nc-2}+\cdots+\as_{1}\ql^{0}\xspace}
    \newcommand{\aSequence}{\mathbf{\as}=(\as_{\nc},\as_{\nc-1},\cdots,\as_{1})\xspace}
    \newcommand{\cSequence}{\mathbf{\elementC}=(\elementC_{n},\elementC_{n-1},{\cdots},\elementC_{1})\xspace}
    \newcommand{\cCodeword}{\mathbf{\elementC}=\elementC_{n}\elementC_{n-1}{\cdots}\elementC_{1}\xspace}
    \newcommand{\eSequence}{\sse_1,\sse_2,{\cdots},\sse_n\xspace}
    \newcommand{\POne}{\mathcal{P}\mathit{1}\xspace}
    \newcommand{\PTwo}{\mathcal{P}\mathit{2}\xspace}
    \newcommand{\MADCproblem}{\textit{\ac{MAD} code problem}\xspace}
    \newcommand{\codeambiguityproblem}{\textit{code ambiguity problem}\xspace} %Change the Section II Title also if this is modify
    \newcommand{\resourceefficiency}{grant utilization\xspace}
    \newcommand{\device}{device\xspace}
    \newcommand{\codeword}{codeword\xspace}
    \newcommand{\validcw}{\RevOne{inferred} valid codeword\xspace}
    \newcommand{\coverededge}{\textit{covered hyperedge}\xspace}
    \newcommand{\coveredcw}{\textit{covered \codeword}\xspace}
    \newcommand{\nc}{n\xspace}
    \newcommand{\subframe}{subframe\xspace} \newcommand{\superframe}{superframe\xspace} 
    \newcommand{\rasubframe}{\ac{RA} {\subframe}\xspace}
    \newcommand{\rasuperframe}{\ac{RA} {\superframe}\xspace} 
    \newcommand{\schemetype}{code-expanded\xspace} 
    \newcommand{\rvX}{X\xspace}
    \newcommand{\rvU}{\eta\xspace}
    \newcommand{\rvC}{C\xspace}
    \newcommand{\rvV}{V\xspace}
    \newcommand{\rvG}{G\xspace}
    \newcommand{\rvA}{U\xspace}
    \newcommand{\I}{\mathcal{I}\xspace}
    \newcommand{\J}{\mathcal{J}\xspace}
    \newcommand{\T}{\mathcal{T}\xspace}
    \newcommand{\Q}{\mathcal{S}\xspace}
    \newcommand{\Z}{\mathbb{Z}\xspace}
    \newcommand{\qary}{$q$-ary\xspace}
    \newcommand{\elementC}{c\xspace} %A coordinate of a codeword.
    \newcommand{\elementVS}{x\xspace} %A coordinate of a word.
    \newcommand{\sse}{e\xspace}
    \newcommand{\se}{E\xspace}
    \newcommand{\K}{K\xspace}       % a in Vural's paper 
    \newcommand{\BigL}{n\xspace}    % n in Vural's paper l in Pratas Paper
    \newcommand{\M}{M\xspace}       % 
    \newcommand{\N}{N\xspace}       % m in Vural's paper
    \newcommand{\R}{R\xspace}
    \newcommand{\rl}{a\xspace}
    \newcommand{\rs}{r\xspace}
    \newcommand{\as}{a\xspace}
    \newcommand{\ql}{q\xspace}
    \newcommand{\ka}{k\xspace}
    \newcommand{\NC}{N_{C}\xspace}%\Expectation[N_{C}]
    \newcommand{\NS}{N_{S}\xspace}%\Expectation[N_{S}]
    \newcommand{\kb}{t\xspace}
    \newcommand{\Nb}{K\xspace}
    \newcommand{\pb}{p\xspace}
    \newcommand{\Palloc}{P_{alloc}\xspace}
    \newcommand{\network}{network\xspace}
    \def\PRatioWidthFigures{0.32\linewidth} 
    \def\PIHSpaceFigures{-0.0em}
    \def\PHSpaceFigures{-0.0em} 
    \def\cBallsReceived{green}
    \def\cBallsValid{blue}
    \def\cBallsCodebook{white}
    \def\cNonInCode{black}
    \def\cBallsUndetectedPreambles{gray}
    \def\cBallsPreambles{red}%purple
    \newtheorem{definition}{Definition}
    \newtheorem{theorem}{Theorem}
    \newtheorem{lemma}{Lemma}
    \newcommand{\fOneWR}{2\left(1-\nicefrac{1}{r}\right)^{\Nb}}
    \newcommand{\fTwoWR}{\left(1-\nicefrac{2}{r}+\nicefrac{1}{\M}\right)^{\Nb}}
    \newcommand{\fExpectationVNTwo}{\M {\cdot} \left[1-\left[\fOneWR-\fTwoWR\right]\right]\xspace}%fOneNTwo-\fTwo
    \LetLtxMacro{\oldsqrt}{\sqrt}
    \renewcommand{\sqrt}[2][\mkern8mu]{\mkern-4mu\mathop{\oldsqrt[#1]{#2}}}
\begin{document}
%habilitar para enviar draft para Nelson
%\setstretch{2.0}
% author names and IEEE memberships
% note positions of commas and nonbreaking spaces ( ~ ) LaTeX will not break
% a structure at a ~ so this keeps an author's name from being broken across
% two lines.
% use \thanks{} to gain acess to the first footnote area
% a separate \thanks must be used for each paragraph as LaTeX2e's \thanks
% was not built to handle multiple paragraphs
%
\author{
\IEEEauthorblockN{Carlos~A.~Astudillo,~\IEEEmembership{Student Member,~IEEE},
Ekram~Hossain,~\IEEEmembership{Fellow,~IEEE}, and Nelson~L.~S.~da~Fonseca,~\IEEEmembership{Senior Member,~IEEE}%
\vspace{-5mm}
\thanks{%Manuscript received March 30, 2021; revised April 24, 2021; accepted May 19, 2021. date of publication May XX, 2021. %date of current version June XX, 2021. 
This work was supported by the São Paulo Research Foundation (FAPESP) under grant number 15/24494-8 and 19/22065-3, and the Government of Canada through the Emerging Leaders in the Americas Program (ELAP). 
%The associate editor coordinating the review of this paper and approving it for publication was K. Adachi. (\emph{Corresponding author: Nelson L. S. da Fonseca.})
}%
\thanks{C.~A.~Astudillo and N.~L.~S~da~Fonseca are with the Institute of Computing, University of Campinas 13083-852, Brazil. % 
E.~Hossain is with the Department of Electrical and Computer Engineering, University of Manitoba, Winnipeg, MB R3T 5V6, Canada
(e-mails: castudillo@lrc.ic.unicamp.br, ekram.hossain@umanitoba.ca, nfonseca@ic.unicamp.br).}%
%\thanks{Digital Object Identifier 10.1109/LWC.2021.XXXXXXX}
} }

% The paper headers
%\markboth{IEEE Wireless Communications Letters,~Vol.~X, No.~X, XXXXXXXX~2021}%
%{Astudillo \MakeLowercase{\textit{et al.}}: Random Access Based on Maximum Average Distance Code for Massive MTC in Cellular IoT Networks}
% paper title
\title{\huge Random Access Based on Maximum Average Distance Code for Massive MTC in Cellular IoT Networks}%

% make the title area
\maketitle
% As a general rule, do not put math, special symbols or citations
% in the abstract or keywords.
\begin{abstract}
    Code-expanded Random Access (CeRA) is a promising technique for supporting \acl{mMTC} in cellular \acs{IoT} networks. 
    However, its potentiality is limited by code ambiguity, which results from the inference of a larger number of \codeword{s} than those actually transmitted.
    In this letter, we propose a \ac{RA} scheme to alleviate this problem by allowing  devices to select the preambles to be transmitted considering a \textit{q}-ary code with \acl{MAD}. 
    Moreover, a CeRA decoding approach based on hypergraphs is proposed and an analytical model \RevOne{is} derived. 
    Numerical results show that the proposed scheme significantly increases the probability of successful channel access as well as resource utilization.
\end{abstract}
\begin{IEEEkeywords}%
Cellular IoT networks, coding theory, Internet of things, massive machine-type communications, random access.%
\end{IEEEkeywords}
\acresetall
\vspace{-2mm}
% For peer review papers, you can put extra information on the cover
% page as needed:
% \ifCLASSOPTIONpeerreview
% \begin{center} \bfseries EDICS Category: 3-BBND \end{center}
% \fi
%
% For peerreview papers, this IEEEtran command inserts a page break and
% creates the second title. It will be ignored for other modes.
\IEEEpeerreviewmaketitle
\bstctlcite{IEEEexample:BSTcontrol}
\section{Introduction}
\label{sec:Introduction}
    %**************************************************
    %*                  Motivation                    *
    %************************************************** 
    \IEEEPARstart{T}{he}
    main challenge in supporting \ac{mMTC} use case category in 5G networks is the handling of a massive number of devices in the \acf{RA} procedure. 5G networks rely on New Radio (NR) and \ac{CIoT} network technologies, such as \acs{LTE-M} and \acs{NB-IoT}, for supporting \ac{mMTC}. These technologies employ \acs{RA} procedures based on grants, which require the transmission of an orthogonal preamble sequence before the reception of a grant for data transmission.
    One of the limiting factors is the reduced number of \ac{RA} contention resources (e.g. up to $48$ orthogonal preambles in NB-IoT or $64$ in LTE-M/NR), which can lead to a large number of transmission collisions. Such collisions decreases the probability of successful channel access and reduces the efficiency in the usage of available resources.
    
    %**************************************************
    %*          Related Work on mMTC RA schemes       *
    %**************************************************       
    %\cite{Kim_etal_CL2020-8901178} proposes random access parallelization.
    Several improvements to grant-based \ac{RA} procedures have been proposed, including overload control, \ac{NOMA} transmissions, and collision detection  and resolution.
    Some proposals adopt the transmission of more than one preamble \cite{Mostafa_8647963,Thomsen_TETT2013,Vural_CL20187-7908954} in order to overcome the  limitation of \ac{RA} resources as well as reduce the chances that two or more devices select the same RA contention resource. 
    One alternative is to aggregate two preambles with different \ac{ZC} root sequences in the same \rasubframe \cite{Mostafa_8647963}, which may require hardware modification and can considerably increase the  multiple-access interference due to the loss of orthogonality. Moreover, this cannot be applied to the NB-IoT preamble, which is based on a single-tone waveform combined with frequency hopping.

    %**********************************************
    %               The CeRA Scheme               *
    %**********************************************
    \begin{figure}[!t]
    \vspace{-4mm}
    \centering
    \subfloat[\CeRA scheme]
    {\label{fig:codeambiguity_cera}\hspace{-0mm}
    \includegraphics[width=0.476\linewidth]{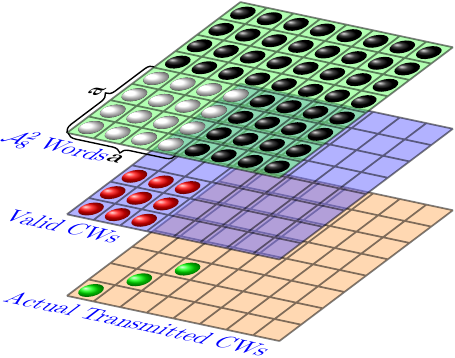}%0.545
    }
    \subfloat[\MyCeRA scheme]
    {\label{fig:codeambiguity_mycera}\hspace{-0mm}
    \includegraphics[width=0.43316\linewidth]{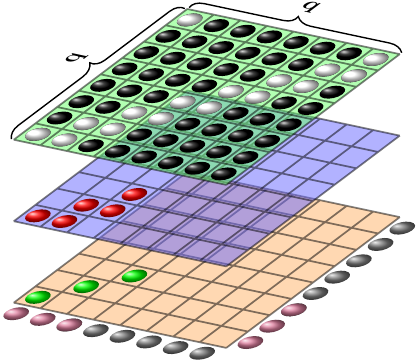}%0.5
    }
    \caption{Illustration of \CeRA decoding and the \codeambiguityproblem. 
    $2$ \rasubframe{s} per \superframe, $8$ preambles available per \subframe, $16$ \codeword{s} (\cBallsCodebook \xspace balls), and $3$ transmitted \codeword{s} (\cBallsReceived \xspace balls); 
    \cNonInCode \xspace balls denote words in $\Aqn{8}{2}$ not in the code;
    \cBallsUndetectedPreambles \xspace and \cBallsPreambles \xspace balls represent, respectively, the idle and detected preambles;
    and \cBallsValid \xspace balls denote the valid \codeword{s}.
    }
    \vspace{-4mm}
    \label{fig:schematics_motivation}
    \end{figure}
    
    A more attractive alternative is \schemetype \ac{RA} (CeRA) \cite{Thomsen_TETT2013,Vural_CL20187-7908954}, in which each device performs random access by using consecutive transmissions of randomly selected preambles over $\BigL$ \rasubframe{s}, to form an \rasuperframe. 
    At every \rasubframe, a preamble is randomly selected from the set of $\rl$ allowed preambles.
    This sequence of preambles over the \superframe is interpreted as a \codeword of length $\BigL$. Uplink resources for data transmission are allocated on the basis of the \codeword{s} inferred by the \ac{BS} in the \superframe.
    
    However, in this scheme, individual \codeword{s} are not received by the \ac{BS}, where the preamble transmission from several \device{s} at each \subframe is identified. Consequently, a larger number of \codeword{s} can be considered valid than those actually transmitted in a \superframe. 
    \RevOne{For example, consider a \CeRA scheme with two \rasubframe per \superframe, four preambles allowed in a subframe among the eight available preambles, and $16$ \codeword{s} as illustrated in Fig. \ref{fig:codeambiguity_cera}.
    The possible codewords for the devices to perform \ac{RA} are represented by \cBallsCodebook \xspace balls.
    Consider that $3$ different \codeword{s} (\cBallsReceived \xspace balls) are transmitted by the devices in a \superframe. The \ac{BS} detects the preambles transmitted at each \subframe of a \superframe (\cBallsPreambles \xspace balls) and generates the set of \validcw{s} (\cBallsValid \xspace balls).
    Note that all possible combinations of preambles received in different \subframe{s} within the \superframe are regarded as valid.} %
    This problem is known as the \codeambiguityproblem \cite{Vural_CL20187-7908954} and it impacts greatly on the utilization of uplink resources as well as on success probability under resource constraints.

    %**************************************************
    %* Current Solution to the Code Ambiguity Problem *
    %**************************************************  
    %**************************************************
    %*              Related Work                      *
    %**************************************************  
    Only recently, various solutions have been proposed for  the \codeambiguityproblem \cite{Thomsen_TETT2013,Vural_CL20187-7908954,Jiang_WOCC2019,Jiang_TWC2019-8770292}. 
    It has been proposed that the \ac{MAC} layer adapts the number of preambles allowed in a \subframe considering the device load in order to reduce the number of possible combinations of preambles that can be inferred \cite{Vural_CL20187-7908954,Thomsen_TETT2013}. However, this strategy does not directly tackle the problem. On the other hand, solutions using \ac{mMIMO} \cite{Jiang_WOCC2019,Jiang_TWC2019-8770292} assume perfect channel estimation and spatial orthogonality which are hard to achieve in practice, especially in \ac{mMTC} scenarios. These solutions increase the system complexity and 
    cannot be used by \ac{CIoT} technologies since they employ low resolution \acs{MIMO} settings and either \acl{FDD} or half-duplex. 
    The \codeambiguityproblem must be addressed to exploit the true potentiality of \CeRA and enable its wide adoption for \acs{mMTC} in 5G and beyond networks.%\ac{CIoT}
    
%**************************************************
%*          the proposal                    *
%************************************************** 
    This paper proposes \MyCeRA, a \CeRA scheme based on \ac{MAC} coding to alleviate the \codeambiguityproblem.
    In the proposed scheme, the transmitted preambles are chosen on the basis of a new \qary \ac{MAD} code, which %reduces the expected number of valid \codeword{s} inferred in a \superframe
    \RevOne{minimizes the similarity between codewords}. 
    Moreover, this paper introduces an encoding procedure for a \ac{MAD} code, and a decoding approach for \CeRA schemes based on hypergraphs, as well as an analytical model for \RevOne{\CeRA schemes}. 
    \RevOne{Table \ref{tab:notation} shows the notation used in the paper.}
    
    \begin{table}[!t]
        \centering
        %\normalsize
        %\scriptsize
        %\footnotesize
        \small
        \RevOne{\caption{Notation}
        \label{tab:notation}
        %\vspace*{3em}
        \renewcommand{\arraystretch}{0.95}
        %\centering
        \resizebox{\columnwidth}{!}{%
        \begin{tabular}{l||l}
        \hline
        \bfseries Symbol & \bfseries Definition \\
        \hline\hline
        $\I$ & Index set of the codeword coordinates \\
        $\J$ & Index set of the code symbols\\
        $\T$ & Index set of the codewords \\
        $\Palloc$ & Resource allocation probability \\
        $P_{N}$ & Codeword non-colission probability\\
        $P_{S}$ & RA success probability\\
        $\rvU$ & Grant utilization \\
        $K$ & Number of contending devices\\% as perceived by the \ac{BS}
        $M$ & Number of available codewords \\% as perceived by the \ac{BS}
        $N$ & Number of words in $\Aqn{q}{n}$\\
        $R$ & Number of available resources per superframe\\
        $V$ & Number of \validcw{s}\\
        $a$ & Number of allowed preambles \\%(\CeRA)
        $k$ & Multiplicative factor of available preambles \\%(\MyCeRA)
        $n$ & Number of RA subframes per superframe \\
        $q$ & Number of available preambles \\
        $r$ & Number of preambles used in the code\\\hline
        \end{tabular}}}
        \vspace*{-0.8em}
    \end{table}

\section{Maximum Average Distance Codes and their Application to Code-Expanded RA}
\label{sec:proposed_solution}   
    This section introduces the \MyCeRA scheme to  tackle the \codeambiguityproblem. \MyCeRA optimizes the code used to perform  consecutive preamble transmissions.
    For some fixed value of $n$, when the number of available \codeword{s} is reduced, redundancy is introduced, imposing constraints to the transmitted messages (\codeword{s}) in a way that just a fraction of all possible received messages (all combinations of the received preambles at each \subframe) are  valid \codeword{s}. 
    By reducing the similarity between the \codeword{s} in the code, the number of \validcw{s} can be reduced as well. This reduction can be formulated as the maximization of the average Hamming distance between the \codeword{s} in a code.

    In contrast to the existing \CeRA schemes, in which the \codeword is formed by a sequence of preambles selected randomly from the $\rl$ allowed preambles, in the \MyCeRA scheme, the \codeword is first randomly selected from a code \RevOne{$\code$} with maximum average Hamming distance (a \acs{MAD} code), considering all the $q$ available preambles, \RevOne{i.e. $\code \subseteq \Aqn{q}{n}$, where $\Aqn{q}{}$ is a finite set of $q$ symbols (a $q$-ary alphabet)}. Then, a preamble is transmitted at the \nth{i} \subframe according to the \nth{i} coordinate of the chosen \codeword. 
    \RevOne{Using the \MyCeRA approach, the CeRA scheme in \cite{Vural_CL20187-7908954} becomes equivalent to randomly selecting a codeword from a code $\code = \Aqn{a}{n}$.}
    
    Since a \qary code of a given size and length with the \acs{MAD} property is unknown, we  formulate the \acs{MAD} code problem, characterize these codes, and show their construction. 
    Moreover, we present a hypergraph representation of codes for \CeRA which facilitates their storage, analysis, and processing. Finally, we introduce a simple method based on the proposed hypergraph representation to perform decoding in \CeRA.%

\subsection{Preliminaries}
    Let $n$ and $q$ be positive integers and let $\Aq{q}=\{0,1,...,q-1\}$ be a finite set, where $q \geq 2$ denotes $|\Aq{q}|$.
    A code $\code$ of block-length $n$ over a $q$-ary alphabet $\Aq{q}$ is a subset of $\Aqn{q}{n}$.
    The elements of $\Aqn{q}{n}$, also called words, are all the ordered $n$-tuples over $\Aq{q}$ of the form $\mathbf{\elementVS}=\elementVS_{n}\elementVS_{n-1}{\cdots}\elementVS_1$, where $\elementVS_{i} \in \Aq{q}$; 
    the elements of $\code$ are called \codeword{s}; 
    and $\M = |\code|$ denotes the size of $\code$. 

    %The average Hamming distance of $\code$ is calculated as $\ahd{\code}=\nicefrac{1}{\M^2}\cdot\sum_{\mathbf{x}\in \code}\sum_{\mathbf{y}\in \code} \hd{\mathbf{x}}{\mathbf{y}}$ \cite{Fang-Wei_IEEE_TIT-1999_749033}, where $\hd{\mathbf{x}}{\mathbf{y}}=|\{i:i \in \{1,2,\cdots,n\}, x_{i} \neq y_{i}\}|$ is the Hamming distance between two $n$-tuples $\mathbf{x},\mathbf{y} \in \Aqn{q}{n}$ defined as the number of coordinates in which $\mathbf{x}$ and $\mathbf{y}$ differ.
    
    The Hamming distance between two $n$-tuples $\mathbf{x},\mathbf{y} \in \Aqn{q}{n}$ is %denoted by $\hd{\mathbf{x}}{\mathbf{y}}$ and 
    defined as the number of coordinates in which $\mathbf{x}$ and $\mathbf{y}$ differ and it is given by $\hd{\mathbf{x}}{\mathbf{y}}=|\{i:i \in \{1,2,\cdots,n\}, x_{i} \neq y_{i}\}|.$ %1\leq i \leq n
    The average Hamming distance of $\mathbf{\code}$ is calculated as %\cite{Fang-Wei_IEEE_TIT-1999_749033} % 
    %\begin{equation}\label{eq:ahd_of_c}   
    ${\nolinebreak\ahd{\code}=\nicefrac{1}{\M^2}\cdot\sum_{\mathbf{x}\in \code}\sum_{\mathbf{y}\in \code} \hd{\mathbf{x}}{\mathbf{y}}}$ %\,.$
    \cite{Fang-Wei_IEEE_TIT-1999_749033}.%
    %\end{equation}%

    \subsection{Problem Formulation}
    The \MADCproblem can be stated as follows: given $\Aqn{q}{n}$ of size $\N=|\Aqn{q}{n}|$, select a \RevOne{code} $\code \subseteq \Aqn{q}{n}$ of size $\M$, where $\M \leq \N$, with maximum average distance among all the possible codes of size $\M$ in $\Aqn{q}{n}$.
    Thus, a discrete optimization problem is formally defined as: 
    \begin{maxi}|l|[2]
    {\code \subseteq \Aqn{\ql}{\BigL}}{\ahd{\code}}
    {\label{eq:ahdc_problem_formulation}}{\POne\hspace{0.5cm}}
    \addConstraint{|\code| = \M.}
    \end{maxi}
    An optimal solution for this problem is denominated an \codeacr-\acs{MAD} code. Note that $\POne$ may have more than one possible optimal solution. Since all the devices in \MyCeRA must use the same code, a systematic solution must be implemented by the devices and the \ac{BS}. 
    The problem $\POne$ is the well-known \textit{optimal subset selection problem}, which is an NP-hard problem. 
    Exhaustive search for solving it, however, would be prohibitive, even for moderate instances of the posed problem. 
    %import scipy.special
    %a=scipy.special.comb(4096, 128, exact=True)
    %len(str(a))
    %Out[13]: 246
    %a=scipy.special.comb(4096, 256, exact=True)
    %len(str(a))
    %Out[15]: 415
    %a=scipy.special.comb(4096, 512, exact=True)
    %len(str(a))
    %Out[17]: 669
    %
    Even though there exist numerical methods that can be applied to obtain a solution for $\POne$
    in polynomial time, theoretical results are exploited here for our objective function to change the domain of the problem from the code itself to a general property of the code, namely, the distribution of the \alphabetElement{s} (code symbols) on each \coordinate in $\code$, denoted by $\pi_{i}=\{\xij{i}{j}: j \in \Aq{q}\}$, $i \in \{1,2,\cdots,n$\}.  
    In this way, we are able to obtain an optimal distribution that characterizes the \acs{MAD} codes. Such characterization makes it possible for us to present a systematic solution, thus facilitating the practical design of the scheme.
    
    From \cite{Fang-Wei_IEEE_TIT-1999_749033}, we have $\ahd{\code}$ equals
    \begin{equation}\label{eq:formula_ahd} 
    \Expectation[\hd{X_{\code}}{Y_{\code}}] 
    =\sum^{n}_{i = 1}\left(1-\sum^{q-1}_{j = 0}\piij^2\right)\,,
    \end{equation}
    where $X_{\code}$ and $Y_{\code}$ are two \iid random $n$-tuples with common distribution $P_{\code}=\{P_{\code}(\mathbf{x}): \mathbf{x} \in \Aqn{q}{n}\}$, $P_{\code}(\mathbf{x})$ equals $\nicefrac{1}{\M}, \forall \mathbf{x} \in \code$ and $0$ otherwise;
    $\piij=\sum_{\substack{\mathbf{x} \in \Aqn{\ql}{\BigL} \\ x_{i} = j}} P_{\code}(\mathbf{x})$, and% 
    \begin{equation}
    \label{eq:formula_xijpdf}
    \sum_{j=0}^{q-1} \piij=1, \forall { i} = 1,2,{\cdots},n.  
    \end{equation}
    
    By using (\ref{eq:formula_ahd}) and (\ref{eq:formula_xijpdf}), we formulate the following optimization problem:%
    \begin{maxi}|l|[2]
    {\{\piij\}}{\Expectation[\hd{X_{\code}}{Y_{\code}}]}
    {\label{eq:ahdc_reformulated_problem}}{\PTwo\hspace{0.5cm}}
    \addConstraint{(\ref{eq:formula_xijpdf})\,.}
    \end{maxi}
    
    We present next an optimal solution for $\POne$ by constructing a code with the optimal distribution obtained from $\PTwo$. %    

    \subsection{The \MyCeRA code: A MAD Code Construction}
    Since the granularity of the \ac{RA} configuration parameters is limited due to signaling overhead\footnote{For instance, even though there are up to $64$ preambles available for \ac{RA} in NR or LTE-M, just $14$ possible values can be configured.}, we restrict ourselves to code sizes of multiple of $\ql$. %
    By assuming this restriction, Theorem \ref{th:equllity_optimalcode_probabilitydistribution} gives the optimal solution for $\PTwo$.
    
    \begin{theorem}[Characterization of MAD codes]
    \label{th:equllity_optimalcode_probabilitydistribution}
    A code $\code \subseteq \Aqn{q}{n}$ of size $\M$ multiple of $q$ is an \codeacr-\acs{MAD} code \myiff $\xij{i}{j}$ equals $\nicefrac{1}{q}$ for all $i=1,2,{\cdots},n$ and $j= 0,1,{\cdots},q-1$.%
    \end{theorem}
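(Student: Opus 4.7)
The plan is to reduce $\POne$ to the distributional problem $\PTwo$ by means of (\ref{eq:formula_ahd}), solve $\PTwo$ by a coordinate-wise Cauchy--Schwarz argument, and then exhibit an explicit code of size $\M$ that realizes the optimal distribution. The key structural observation is that $\Expectation[\hd{X_{\code}}{Y_{\code}}]$ decomposes as a sum over the $\BigL$ coordinates in which term $i$ depends only on the marginal $\pi_i$, so the joint optimization separates into $\BigL$ independent one-dimensional problems on the probability simplex.

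For each fixed $i$, minimizing $\sum_{j=0}^{\ql-1}\piij^{2}$ subject to $\sum_{j=0}^{\ql-1}\piij=1$ is handled by Cauchy--Schwarz (equivalently, by the convexity of $t\mapsto t^{2}$):
\begin{equation*}
\sum_{j=0}^{\ql-1}\piij^{2} \;\geq\; \tfrac{1}{\ql}\Bigl(\sum_{j=0}^{\ql-1}\piij\Bigr)^{\!2}=\tfrac{1}{\ql},
\end{equation*}
with equality \myiff $\piij=\nicefrac{1}{\ql}$ for every $j$. Summing over $i$ yields the universal bound $\Expectation[\hd{X_{\code}}{Y_{\code}}]\leq \BigL\,(1-\nicefrac{1}{\ql})$, attained among all probability distributions satisfying (\ref{eq:formula_xijpdf}) precisely when the marginal is uniform on every coordinate; this settles $\PTwo$.

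The remaining, and in my view the chief, obstacle is to show that this extremal distribution is actually induced by some code of the prescribed size $\M$: without such a realizability result one cannot conclude that a MAD code must saturate the Cauchy--Schwarz bound. The hypothesis that $\M$ is a multiple of $\ql$ is essential, since otherwise the counts $\M\cdot\piij$ cannot all be integers. Writing $\M=s\,\ql$ with $s\leq\ql^{\,\BigL-1}$, I would construct such a code by considering the free action of $\mathbb{Z}_{\ql}$ on $\Aqn{\ql}{\BigL}$ that adds a common constant modulo $\ql$ to every coordinate; every orbit has size exactly $\ql$, giving $\ql^{\,\BigL-1}$ orbits, and taking the union of any $s$ of them produces a code of size $\M$ in which each symbol appears exactly $s=\M/\ql$ times per coordinate, so $\piij=\nicefrac{1}{\ql}$ for all $i,j$. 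Combining the pieces yields the biconditional: a MAD code must saturate the per-coordinate Cauchy--Schwarz bound and therefore has uniform marginals, while conversely any code with uniform marginals attains $\BigL\,(1-\nicefrac{1}{\ql})$, which the construction has shown is the best value achievable by any code of size $\M$ in $\Aqn{\ql}{\BigL}$, so it is MAD.
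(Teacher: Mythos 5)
Your proof is correct, and it is both more self-contained and more careful than the paper's. The paper disposes of the analytic core by citation: the upper bound $\BigL\left(1-\nicefrac{1}{\ql}\right)$ on $\Expectation[\hd{X_{\code}}{Y_{\code}}]$ is Fu \etal's Theorem~1 and the equality condition $\piij=\nicefrac{1}{\ql}$ is their Theorem~4, whereas you reprove both in two lines by coordinate-wise Cauchy--Schwarz after observing that the objective separates over coordinates; the content is identical but your version stands alone. The more substantive difference is the realizability step. You correctly identify that the ``only if'' direction is empty unless some code of size $\M$ actually attains the bound---otherwise a maximizer of $\POne$ need not have uniform marginals---and you supply this with the union-of-orbits construction under the diagonal $\mathbb{Z}_{\ql}$-action. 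The paper's proof passes over this point in silence; achievability is only established afterwards, by the explicit $(\nc,\ql,\ka)$-\MyCeRA construction of (\ref{eq:transformation_mycera}), which, as it happens, is precisely a union of $\ka$ orbits of your action (each block of $\ql$ consecutive integers $\{m\ql,\dots,m\ql+\ql-1\}$ differs only in the digit $\as_{1}$, and the map (\ref{eq:transformation_mycera}) adds $\as_{1}$ to every coordinate). So your route buys logical completeness within the proof itself, at the modest cost of anticipating, in abstract form, the construction that the paper prefers to present separately as the encoding procedure.
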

    \begin{proof}%
    We prove this theorem by using Fu \etal's theoretical results for the average Hamming distance of a code \cite{Fang-Wei_IEEE_TIT-1999_749033}.
    The objective function in (\ref{eq:ahdc_reformulated_problem}) is upper bounded by %$E[\hd{X_{\code}}{Y_{\code}}] \leq
    $n\left(1-\nicefrac{1}{q}\right)$ \cite[Theorem $1$]{Fang-Wei_IEEE_TIT-1999_749033}, and equality with this upper bound holds \myiff $\xij{i}{j}=\nicefrac{1}{q}, \forall i,j$ \cite[Theorem  $4$]{Fang-Wei_IEEE_TIT-1999_749033}\footnote{Even though the theorem just claims the necessary condition, its proof shows that it is also sufficient.}. Given that $\pi_{i}$ is the distribution of the \alphabetElement{s} on coordinate $i$ over all \codeword{s} in the code, $\xij{i}{j}=\nicefrac{1}{q}, \forall i,j$ implies that the code size is a multiple of $q$ and Theorem \ref{th:equllity_optimalcode_probabilitydistribution} follows.
    \end{proof}
    
    Theorem \ref{th:equllity_optimalcode_probabilitydistribution} gives the distribution $\pi_{i}$ of a \ac{MAD} code. However, a code with that property must be constructed to obtain a solution for $\POne$.
    A specific construction of an \codeacr-\ac{MAD} code is thus provided here, the $(\nc,\ql,\ka)$-\MyCeRA code, where $\ka=\nicefrac{\M}{\ql}$, \RevOne{$\ka \in \{1,2,\cdots,\ql^{n-1}\}$}, is a multiplicative factor of the available preambles of the \MyCeRA scheme. 
    
    The rationale of the proposed  method follows that in  \cite{Sharma_IS-1978}.
    Let $\T$ be the index set of the \codeword{s} in code $\code \subseteq \Aqn{q}{n}$.
    $\T=\{0,1,{\cdots},\M-1\}$, $\M \leq \N = \ql^\nc$. An integer $\kb \in \T$ has a unique representation $\qaryRepresentation$, where $\as_{i} \in \Z_{q}$. Thus, there is a unique correspondence between each $\kb \in \T$ and an $\nc$-tuple of $\Z_{\ql}^{\nc}$ $\aSequence$. We can always arrange the elements of $\Z_{\ql}^{\nc}$ in the natural order of the number they represent.
    Based on this $q$-ary representation of integers $0,1,{\cdots},\M-1$, if $\cSequence$ is the \nth{t} code vector of an $(\nc,\ql,\ka)$-\MyCeRA code, then $\mathbf{\elementC}$ is obtained from the corresponding $\nc$-tuple $\aSequence$ by employing the following mapping: 
    \begin{equation}\label{eq:transformation_mycera}
    \elementC_{i} = \sum^{i}_{l = 1} \as_{l}\pmod{\ql}, \forall i=1,2,\cdots,\nc.
    \end{equation}
    
    As an illustration, Table \ref{table:transformation} gives the above transformation for the $(2,8,k)$-\MyCeRA code, $k=1,2$. Since $\xij{i}{j} = \nicefrac{1}{8}$, $\forall i,j$, the resulting code is a \RevOne{\madcode{2}{8}-MAD} code, assured by Theorem~\ref{th:equllity_optimalcode_probabilitydistribution}. 
    Note that a simple encoding procedure for the \MyCeRA scheme can be derived from this construction. In the beginning of a \superframe, a device randomly selects an integer between $0$ and $M-1$, calculates its $q$-ary representation, and applies the transformation in (\ref{eq:transformation_mycera}) to obtain a random \MyCeRA \codeword, that defines which preamble is transmitted at each \rasubframe, as explained in the beginning of this section.
    \begin{table}[!t]        
    \centering
    \caption{Transformation for the $(2,8,k)$-\MyCeRA code}\label{table:transformation}
    \begingroup %To reduce the font size
    \scriptsize % Other options:?\tiny \small \footnotesize \normalsize \large \Large \LARGE
    % \huge huge\par \Huge Huge\par
        \begin{tabular}{cccc|cccc}\hline%
            \bfseries k & \bfseries CW & $\boldsymbol{\Z_{8}^2}$ & \bfseries \MyCeRA & 
            \bfseries k & \bfseries CW & $\boldsymbol{Z_{8}^2}$ & \bfseries \MyCeRA 
            \\\hline\hline
            \multirow{8}{*}{1 or 2} & 0 & 00 & 00 & \multirow{8}{*}{2} & 8 & 10&10\\%110
            & 1 & 01 & 11 & & 9 & 11 & 21\\
            & 2 & 02 & 22 & & 10 & 12 & 32\\
            & 3 & 03 & 33 & & 11 & 13 & 43\\
            & 4 & 04 & 44 & & 12 & 14 & 54\\
            & 5 & 05 & 55 & & 13 & 15 & 65\\
            & 6 & 06 & 66 & & 14 & 16 & 76\\
            & 7 & 07 & 77 & & 15 & 17 & 07\\
            \hline
        \end{tabular}
    \endgroup
    \vspace{-3mm}
    \end{table}

    \subsection{Hypergraph Representation of Codes for \CeRA Schemes}
    \label{sec:modeling_cera_scheme}
        % ##### Hypergraph #####
        \RevOne{Let $\rs \leq q$ be the number of preambles used in the code.}\footnote{\RevOne{$\rs$ equals 
        $\rl=\M^{\nicefrac{1}{n}}$
        %$\rl=\sqrt[n]{\vspace{-1mm}\M}$ 
        for \CeRATwo \cite{Vural_CL20187-7908954} and $\ql$ for \MyCeRA.}} Let $\I = \{1,2,{\cdots},n\}$ and $\J = \{0,1,{\cdots},\rs-1\}$, be the index set of the codeword coordinates %(\rasubframe{s} in a \superframe) 
        and code symbols %(preambles used) in the code
        , respectively. We model a code $\code \subseteq \Aqn{\rs}{n}$ of size $\M$ for \CeRA as a simple \mbox{$n$-partite} \mbox{$n$-uniform} hypergraph % $\HGraph{n}$
        $\HSymbol = (\V,\E)$, where $\V = \{x_{i,j}: i \in \I \wedge j \in \J\}$ is a finite set of vertices, and  $\E = \{\se_{\kb}: \kb \in \T \}$ %
        is a family of non-empty subsets of $\V$ called hyperedges.
        $\V$ can be partitioned into $n$ disjoint subsets $\V_{1},\V_{2},{\cdots},\V_{n}$ (\mbox{$n$-partite}), %
        and each hyperedge $\se_{\kb} \in \E$ contains $n$ vertices (\mbox{$n$-uniform}), exactly one vertex from each subset of $\V$, i.e. $\se_{\kb} = \{\eSequence\}$, where $\sse_i \in \V_{i}$, $\forall i$. %
        In this model, $\V_{i}=\{x_{i,j}: j \in \J\}$ 
        represents the set of preambles used at \nth{i} \rasubframe, and $\E$ represents the \codeword{s} of $\code$. Thus, there is a hyperedge $\se = \{\sse_1,\sse_2,{\cdots},\sse_n\} \in \E$ \myiff there is a \codeword $\mathbf{c}=c_{n}c_{n-1}{\cdots}c_{1} \in \code$ such that $\sse_{i}=x_{i,c_{i}}$, $\forall i$.

\subsection{Code-expanded RA Decoding}
\label{sec:decoding_cera_codes}   
    % ##### Decoding Modelling #####
    Unlike traditional decoding schemes in which the input of the decoding process is a received message and the output is the transmitted \codeword or a list of possible candidates, the decoding process in \CeRA has as input the set of received code symbols (detected preambles) at each \coordinate (\rasubframe) and as output the list of \validcw{s}\footnote{\CeRA can be seen as unsourced \ac{RA} \cite{8006984-Polyanskiy2017} %\cite{dommel2020sparse} 
    at the \ac{MAC} layer, in which the objective is to communicate a message to get channel access.}, which calls for innovative decoding strategies.
    In this paper, we formulate the \schemetype \ac{RA} decoding as operations over the hypergraph representation of the code, as given in Lemma \ref{th:lemma_mad_code_decoding}.
    Note that Lemma \ref{th:lemma_mad_code_decoding} applies to any existing code for \schemetype \ac{RA}.
    %--------------------------------------------------
    \begin{definition}[Set of \validcw{s}]
    \label{def:valid_codewords}%
    Given $\Y= \{{\Y}_{i}: i \in \I \} \subseteq \V$, %1 \leq i \leq n 
    where ${\Y}_{i}$ is the set of preambles detected at \nth{i} \rasubframe, the set of valid codewords inferred by the \ac{BS} in a \superframe %($\mathcal{\rvV}$) 
    is given by all \codeword{s} of the form $\cCodeword \in \code$ such that $x_{i,\elementC_{i}} \in \Y_{i}$, $\forall i$. %\in \I$. % $\mathbf{c}=c_{1}{\cdots}c_{i}{\cdots}c_{n} %$1 \leq i \leq n$ 
    \end{definition}
    %-------------------------------------------------
    \begin{definition}[Induced subhypergraph]
    \label{def:induced_subhypergraph}%
    An \textit{induced subhypergraph} of a hypergraph \mbox{$\HSymbol = (\V,\E)$}, is the hypergraph $\IGraph{\Y}{\E\textprime}$, where $\Y {\subseteq} \V$ and a hyperedge $\se \in \E$ is in $\E\textprime$ \myiff all the vertices of $\se$ are in $\Y$. 
    \end{definition}
    %--------------------------------------------------
    \begin{lemma}[Decoding in \CeRA] \label{th:lemma_mad_code_decoding}
    Given a code $\code \subseteq \Aqn{r}{n}$ with $\HSymbol = (\V,\E)$ and $\Y= \{{\Y}_{i}: i \in \I \} \subseteq \V$, where ${\Y}_{i}$ is the set of preambles detected at \nth{i} \rasubframe of a given \superframe, the set of valid \codeword{s} inferred in the \superframe is given by the hyperedge set of the subhypergraph induced by $\Y$. %($\IHSymbol{\Y}$).
    \end{lemma}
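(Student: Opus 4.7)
The lemma is essentially an unpacking of the definitions: it asserts that the codeword-to-hyperedge correspondence set up in Section~\ref{sec:modeling_cera_scheme} carries the ``valid codeword'' condition of Definition~\ref{def:valid_codewords} onto the ``induced hyperedge'' condition of Definition~\ref{def:induced_subhypergraph}. My plan is therefore a direct double-inclusion argument between the set of valid codewords and the hyperedge set $\E^\prime$ of the induced subhypergraph $\IHSymbol{\Y}$, leveraging the bijection $\mathbf{c}=c_{n}c_{n-1}\cdots c_{1}\in\code \longleftrightarrow \se_{\mathbf{c}}=\{x_{1,c_{1}},x_{2,c_{2}},\ldots,x_{n,c_{n}}\}\in\E$.

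First I would take an arbitrary valid codeword $\mathbf{c}\in\code$. By Definition~\ref{def:valid_codewords}, $x_{i,c_{i}}\in\Y_{i}$ for every $i\in\I$, so every vertex of $\se_{\mathbf{c}}$ lies in $\Y=\bigcup_{i\in\I}\Y_{i}$; by Definition~\ref{def:induced_subhypergraph}, $\se_{\mathbf{c}}\in\E^\prime$.

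For the converse I would take $\se\in\E^\prime$. Then $\se\in\E$, so it corresponds to a unique codeword $\mathbf{c}\in\code$ with $\se=\se_{\mathbf{c}}$, and, by Definition~\ref{def:induced_subhypergraph}, all its vertices lie in $\Y$. Here I would invoke the $n$-partite, $n$-uniform structure of $\HSymbol$: each vertex of $\se_{\mathbf{c}}$ belongs to exactly one $\V_{i}$, and the $\V_{i}$'s are pairwise disjoint, so the vertex in $\V_{i}\cap\se_{\mathbf{c}}$, which is $x_{i,c_{i}}$, must in fact lie in $\Y_{i}=\Y\cap\V_{i}$. Hence $\mathbf{c}$ satisfies the condition of Definition~\ref{def:valid_codewords} and is a valid codeword.

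\textbf{Main obstacle.} There is no deep step; the only point that needs care is making sure that ``every vertex of $\se_{\mathbf{c}}$ lies in $\Y$'' is genuinely equivalent to ``$x_{i,c_{i}}\in\Y_{i}$ for every $i$''. This equivalence rests on the disjointness of the partition $\{\V_{i}\}$ together with the fact that by construction $\Y_{i}\subseteq\V_{i}$; once that is spelled out, both inclusions follow immediately from the definitions.
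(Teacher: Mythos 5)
Your proof is correct and follows essentially the same route as the paper's: both reduce the lemma to the observation that a hyperedge lies in the induced subhypergraph iff its vertex in each part $\V_{i}$ lies in $\Y_{i}$, and then identify this with the condition of Definition~\ref{def:valid_codewords} via the codeword--hyperedge bijection. Your version merely makes explicit (via the disjointness of the $n$-partition) a step the paper states without elaboration.
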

    %--------------------------------------------------
    \begin{proof} 
        Assume $\HSymbol = (\V,\E)$ of a code $\code \subseteq \Aqn{r}{n}$. 
        Let $\Y= \{{\Y}_{i}: i \in \I \} \subseteq \V$ be the set of preambles detected at each \rasubframe of a \superframe. 
        From Definition \ref{def:induced_subhypergraph} and the model described in Section \ref{sec:modeling_cera_scheme}, there is a hyperedge $\se= \{\sse_1,\sse_2,{\cdots},\sse_n\} \in \E$ in $\IHSymbol{\Y}$ \myiff $\sse_i \in \Y_{i}$, $\forall i$. % \in \I$. 
        Let us call such a hyperedge a \coverededge and the represented \codeword %$\mathbf{c}=c_{1}{\cdots}c_{i}{\cdots}c_{n} \in \code$ 
        $\cCodeword \in \code$ such that $\sse_{i}=x_{i,c_{i}}$, $\forall i$, a \coveredcw.
        Thus, using Definition \ref{def:valid_codewords} and the above argument, the set of \validcw{s} is equal to the set of \coveredcw{s}, and Lemma \ref{th:lemma_mad_code_decoding} follows.%
    \end{proof}
    %--------------------------------------------------

\subsection{Example}\label{sec:example}
    \RevOne{The following example illustrates the advantage of using \MyCeRA (Fig. \ref{fig:codeambiguity_mycera}) when compared to \CeRA (Fig. \ref{fig:codeambiguity_cera})  \cite{Thomsen_TETT2013,Vural_CL20187-7908954}. 
    It can be noted that although the two schemes have the same number of available \codeword{s} (\cBallsCodebook \xspace balls) and transmitted \codeword{s} (\cBallsReceived \xspace balls), the number of \codeword{s} inferred (\cBallsValid \xspace balls) by \MyCeRA is less
    than that inferred by \CeRA, which evinces the greater efficiency in resource utilization achieved by the proposed scheme.}

\section{Analytical Model for \RevOne{CeRA Schemes}}\label{sec:analytical_model} %the \MyCeRA Scheme
    We introduce an analytical model for \CeRA schemes which covers both the \CeRATwo scheme in \cite{Vural_CL20187-7908954} and the proposed \MyCeRA scheme. Given that the impact of the \codeambiguityproblem on resource utilization efficiency has been neglected in the literature, we include not only the \ac{RA} success probability, but also \resourceefficiency in the model.
    
    Assume that $\Nb$ \device{s} employ a \CeRA scheme with $n$ \rasubframe{s} per \superframe. Let $\rvX$ denote the \ac{RV} of the number of \device{s} contending per \codeword in a given \superframe. The probability distribution of $\rvX$ follows a binomial distribution with parameters $\Nb$ and $\pb=\nicefrac{1}{\M}$, where $\M$ is the number of available \codeword{s} \cite{Thomsen_TETT2013}.
    Thus, $\Probability(\rvX=m) = \Bknp{m}{\Nb}{\pb} = \binom{\Nb}{m}\pb^m\left(1-\pb\right)^{\Nb-m}$. %\cite{Thomsen_TETT2013}. 
    Note that $\M$ equals $\ql {\cdot} \ka$ for the \MyCeRA scheme and $\rl^\BigL$ for the \CeRATwo scheme, \RevOne{where $\rl \in \{1,2,\cdots,\ql\}$}. 
    The expected number of \codeword{s} chosen by at least one \device\xspace \RevOne{(selected)} and by a single \device\xspace \RevOne{(non-collided)} are given by
    $\NC=\M\cdot[1-\Bknp{0}{\Nb}{\pb}]=\M\cdot[1-\left(1-\pb\right)^{\Nb}]$
    and
    $\NS=\M\cdot\Bknp{1}{\Nb}{\pb}=\Nb\left(1-\pb\right)^{\Nb-1}$%
    , respectively.

    The \RevOne{\codeword non-colission} probability for \CeRA schemes is 
    $P_{N}=\nicefrac{\NS}{\K}=\left(1-\pb\right)^{\Nb-1}$,
    and the \ac{RA} success probability can be calculated as $P_{S}=P_{N} \cdot \Palloc$ \cite{Vural_CL20187-7908954}\footnote{\RevOne{We consider a device to be successful if its \codeword is served and non-collided. Since $K$ is the number of contending devices perceived by the \ac{BS} (see Section \ref{sec:practial_example}), the codeword detection probability is not included.}}, where $\Palloc$ is the resource allocation probability.  
    Let $\rvV$ be the \ac{RV} of the number of \validcw{s}. Since $\rvV$ can be greater than \RevOne{the number of available resources in a superframe} ($\R$), not all the \validcw{s} can be served in a given \superframe. Thus, $\Palloc$ equals $\nicefrac{\R}{\Expectation[\rvV]}$ if $\Expectation[\rvV] > \R$, or equals $1$ otherwise \cite{Vural_CL20187-7908954}. Theorem \ref{th:e_v_formula} gives $\Expectation[\rvV]$ for the case when $n=2$, which is of main interest due to the practical reasons \RevOne{in Section \ref{sec:practial_example}}. 
    The proof of Theorem \ref{th:e_v_formula} is described in a way that, given the code, $\Expectation[\rvV]$ can be calculated numerically for a general $n$ value. %

\begin{figure*}[!t]\vspace{-2.1em}\centering\hspace{\PIHSpaceFigures}%
    \subfloat[RA success probability ($P_{S}$)]{\label{fig:mycera_ps}
    \includegraphics[width=\PRatioWidthFigures]{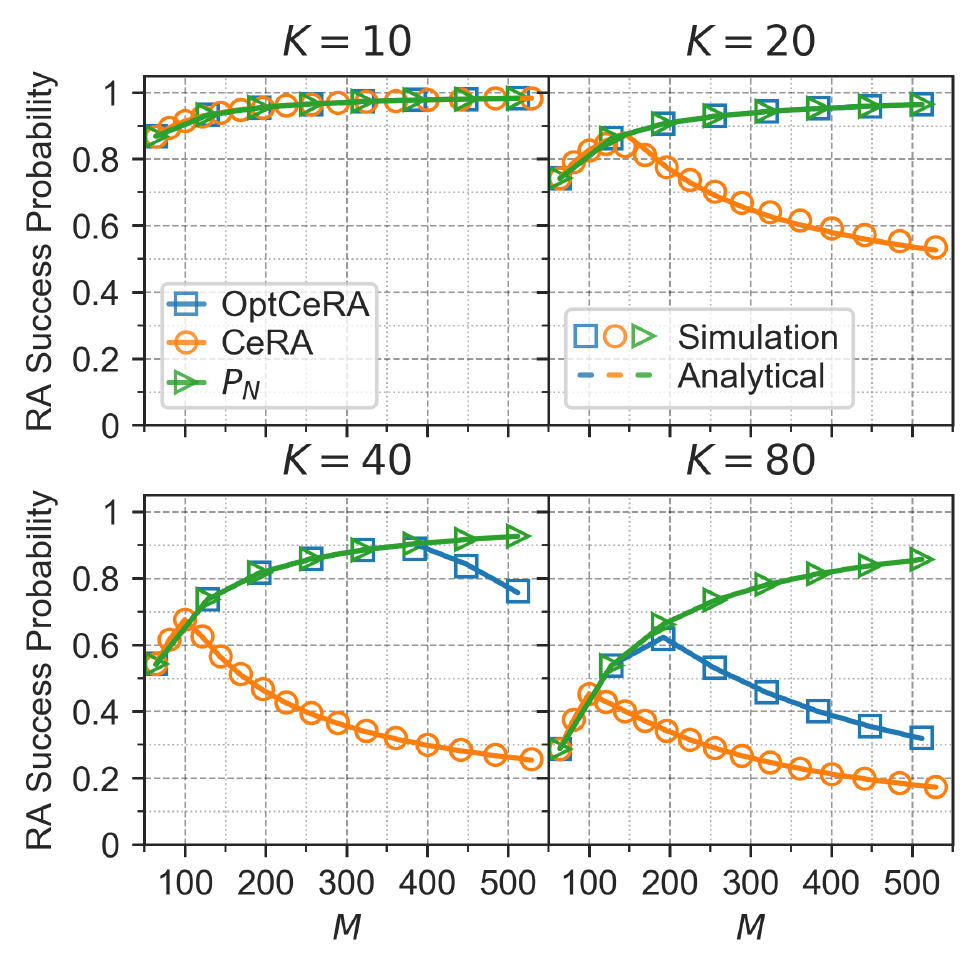}}\hspace{\PHSpaceFigures}
    \subfloat[Number of inferred valid codewords
    ({$\Expectation[\rvV]$})]{\label{fig:mycera_valid} 
    \includegraphics[width=\PRatioWidthFigures]{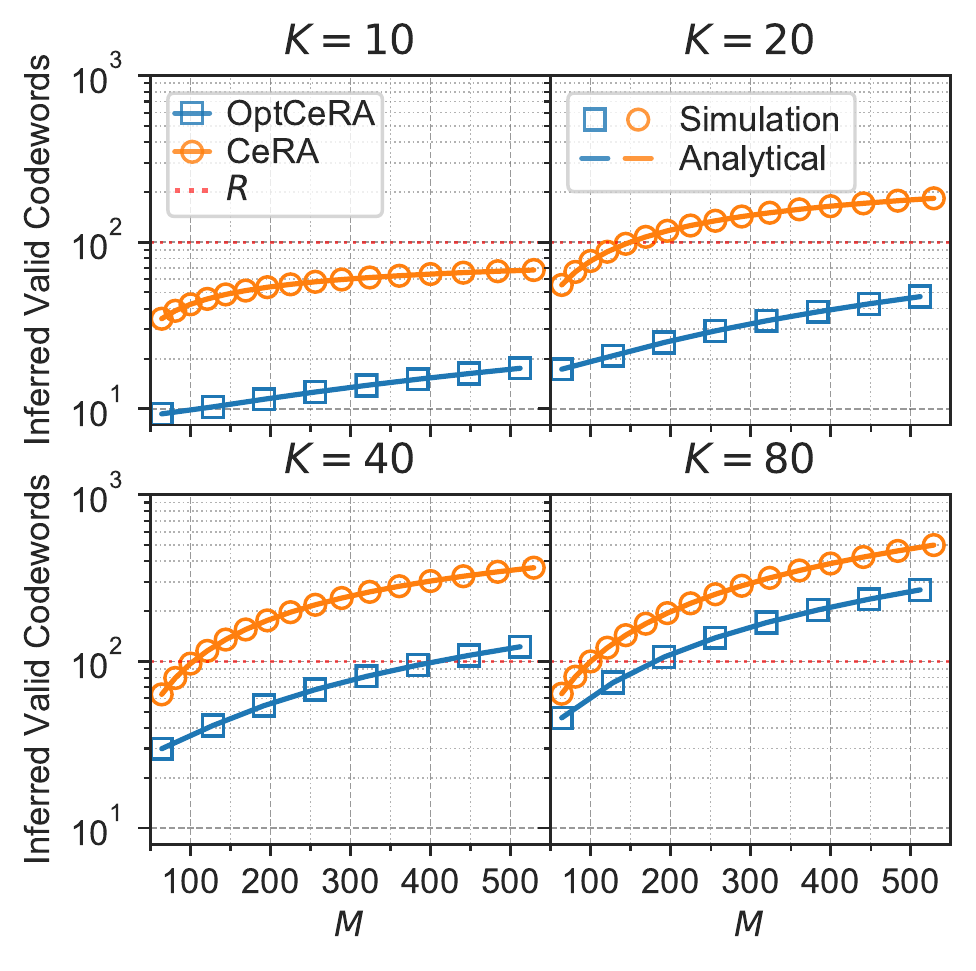}}\hspace{\PHSpaceFigures}
    \subfloat[Grant utilization ({$\Expectation[\rvU]$})]{\label{fig:mycera_putil} 
    \includegraphics[width=\PRatioWidthFigures]{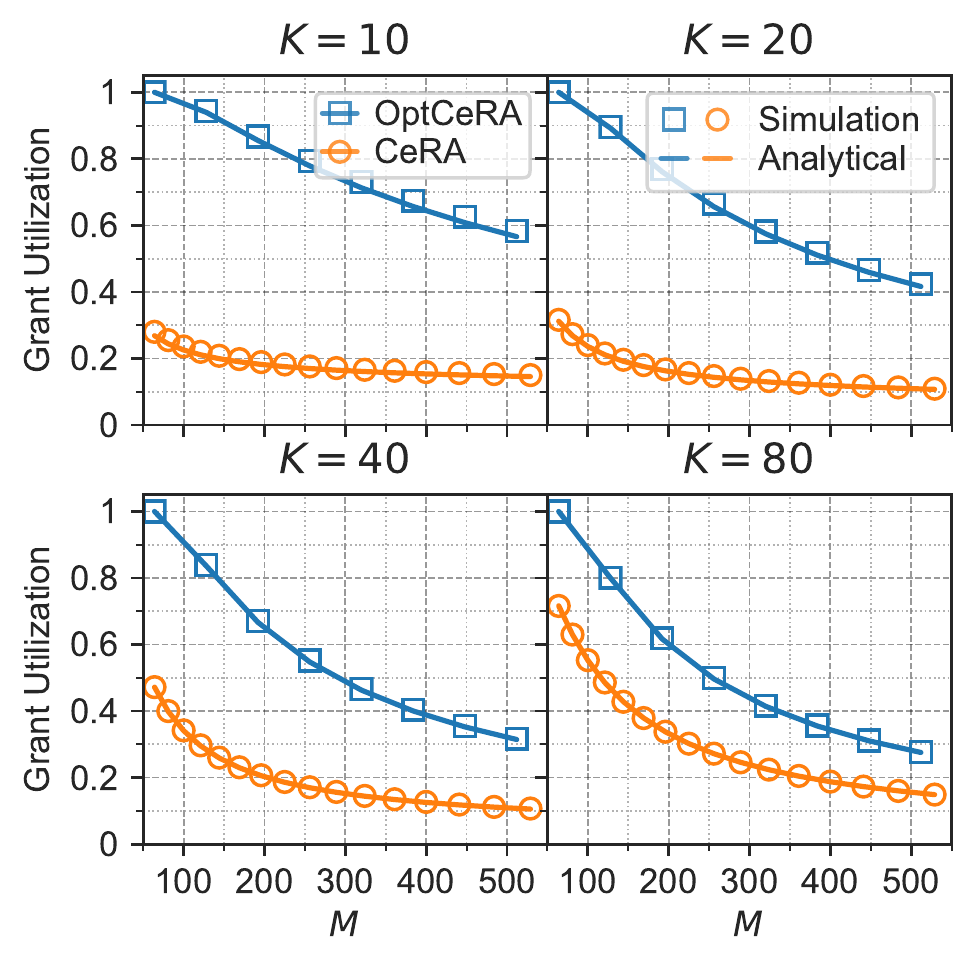}}%
\caption{Performance evaluation results}
\vspace{-4mm}
\label{fig:performance_evaluation_nonadaptive}
\end{figure*}    
   
    \begin{theorem}[Expectation of $\rvV$]
    \label{th:e_v_formula} %\ql
    If $\Nb$ devices are trying \ac{RA} simultaneously employing a \schemetype \ac{RA} scheme with code $\code \subseteq \Aqn{\rs}{2}$ of size $\M$, $\rs \leq \ql$, and $\xij{i}{j}$ equals $\nicefrac{1}{\rs}$, for all $i=1,2$ and $j = 0,1,{\cdots},\rs-1$, then $\Expectation\left[\rvV\right]$ is calculated as
    \begin{equation}
    \label{eq:ev_formula}
    \Expectation[\rvV] = \fExpectationVNTwo. 
    \end{equation}
    \end{theorem}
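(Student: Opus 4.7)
The plan is to compute $\Expectation[\rvV]$ by decomposing $\rvV$ into a sum of indicator variables, one per codeword, and applying linearity of expectation. Specifically, for each $\mathbf{c}=c_{2}c_{1}\in\code$, let $I_{\mathbf{c}}$ be the indicator that $\mathbf{c}$ is a valid (covered) codeword according to Lemma \ref{th:lemma_mad_code_decoding}, i.e.\ $I_{\mathbf{c}}=1$ iff $x_{1,c_{1}}\in\Y_{1}$ and $x_{2,c_{2}}\in\Y_{2}$. Then $\Expectation[\rvV]=\sum_{\mathbf{c}\in\code}\Probability(I_{\mathbf{c}}=1)$. By symmetry (every codeword is chosen with probability $\pb=\nicefrac{1}{\M}$ by each of the $\K$ independent devices), I expect $\Probability(I_{\mathbf{c}}=1)$ to be the same for every $\mathbf{c}\in\code$, so the sum will reduce to $\M$ times a single probability.

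Next, I would fix an arbitrary $\mathbf{c}=c_{2}c_{1}\in\code$ and compute $\Probability(I_{\mathbf{c}}=1)$ by complementation and inclusion–exclusion on the two events $A=\{x_{1,c_{1}}\in\Y_{1}\}$ and $B=\{x_{2,c_{2}}\in\Y_{2}\}$:
\begin{equation*}
\Probability(A\cap B)=1-\Probability(\overline{A})-\Probability(\overline{B})+\Probability(\overline{A}\cap\overline{B}).
\end{equation*}
Using independence across devices, $\Probability(\overline{A})$ equals the probability that none of the $\K$ devices picked a codeword whose first coordinate is $c_{1}$. Because $\xij{1}{c_{1}}=\nicefrac{1}{\rs}$, a single device avoids $c_{1}$ in the first coordinate with probability $1-\nicefrac{1}{\rs}$, so $\Probability(\overline{A})=(1-\nicefrac{1}{\rs})^{\K}$, and similarly $\Probability(\overline{B})=(1-\nicefrac{1}{\rs})^{\K}$. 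For the joint event $\overline{A}\cap\overline{B}$, I need the probability that a single device picks a codeword with neither first coordinate $c_{1}$ nor second coordinate $c_{2}$; by inclusion–exclusion within one device this is $1-\nicefrac{1}{\rs}-\nicefrac{1}{\rs}+\Probability(\text{codeword}=\mathbf{c})=1-\nicefrac{2}{\rs}+\nicefrac{1}{\M}$, where the last term uses that $\mathbf{c}\in\code$ so the joint symbol probability is exactly $\nicefrac{1}{\M}$. Raising to the $\K$-th power yields $\Probability(\overline{A}\cap\overline{B})=(1-\nicefrac{2}{\rs}+\nicefrac{1}{\M})^{\K}$.

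Combining, $\Probability(I_{\mathbf{c}}=1)=1-[2(1-\nicefrac{1}{\rs})^{\K}-(1-\nicefrac{2}{\rs}+\nicefrac{1}{\M})^{\K}]$, and multiplying by $\M$ gives (\ref{eq:ev_formula}). The main obstacle is justifying the joint per-device probability $\Probability(\overline{A}\cap\overline{B})$: one must carefully argue that, for a uniformly chosen codeword of $\code$, the events ``first coordinate equals $c_{1}$'' and ``second coordinate equals $c_{2}$'' are not independent in general, but their intersection has probability exactly $\nicefrac{1}{\M}$ because $(c_{2},c_{1})$ itself belongs to $\code$. This is the place where the hypothesis $\mathbf{c}\in\code$ (rather than $\mathbf{c}\in\Aqn{r}{2}$) is essential; the uniform marginal assumption $\xij{i}{j}=\nicefrac{1}{\rs}$ only pins down the single-coordinate probabilities. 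The same scheme (indicator decomposition plus inclusion–exclusion over the $n$ coordinate events, with the joint per-device probability equal to $\nicefrac{1}{\M}$ whenever the coordinates come from an actual codeword) extends to general $n$, which is why the argument supports numerical evaluation beyond $n=2$.
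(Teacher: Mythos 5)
Your proposal is correct and follows essentially the same route as the paper's proof: linearity of expectation over per-codeword indicators, inclusion--exclusion on the two coordinate events, and the observation that a single device misses coordinate $c_i$ with probability $1-\nicefrac{1}{r}$ and misses both with probability $1-\nicefrac{2}{r}+\nicefrac{1}{M}$ because $\mathbf{c}$ itself is the unique codeword matching both coordinates. The only cosmetic difference is that the paper phrases the argument in terms of the complement (non-covered hyperedges of the induced subhypergraph) and writes the per-device probabilities as $1-\nicefrac{k}{M}$ and $1-\nicefrac{(2k-1)}{M}$ with $k=\nicefrac{M}{r}$, which is algebraically identical to your expressions.
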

    \begin{proof}
    Consider a network employing a \schemetype \ac{RA} scheme. Suppose each of $\Nb$ devices trying \ac{RA} in a given \superframe randomly selects a \codeword from a code $\code \subseteq \Aqn{\rs}{n}$ of size $\M$. 
    Let $\HSymbol = (\V,\E)$ be the hypergraph representation of $\code$. %and let $\E = \{\se_{1},{\cdots},\se_{\kb},{\cdots},\se_{\M}\}$ denote the hyperedges of $\HSymbol$.
    In the \superframe, the base station detects a set of preambles ${\Y} = \{\Y_{i}: i \in \I\}$. %, where $\Y_{i}$ is the set of detected preambles at $i$-th \rasubframe. %: i \in \I \land \in {\V}_{i}
    By Lemma \ref{th:lemma_mad_code_decoding}, the set of \validcw{s} %$\mathcal{\rvV}$
    can be obtained from the set of hyperedges of $\IHSymbol{\Y}$. 
    %
    %#**********************************#
    %
    %#**********************************#
    Let us call an hyperedge \textit{covered} if it is in $\IHSymbol{\Y}$ and \textit{non-covered} otherwise. 
    %Let $e$ in  be called a \coverededge, and     
    %As stated above, an element $e \in \E$ is in $\IHSymbol{\Y}$ if and only if all vertices of $e$ are in $\Y$.  
    %Let $\se \in \E$ in $\IHSymbol{\Y}$ be called a \coverededge.
    Let $A_{\kb}$ be the event that $\se_{\kb} \in \E$ is \textit{covered} and $A^{c}_{\kb}$ its complement; let $I_{A^{c}_{\kb}}$ be the indicator function of $A^{c}_{\kb}$.
    %Let $I$ be the indicator function of the event $\{X=r\}$ that exactly $r$ hyperedges are \textit{covered}. 
    Let $\rvV = \sum_{\kb \in \T} 1-I_{A^{c}_{\kb}}$ be the \ac{RV} of the number of \coverededge{s} or, equivalently, the number of \validcw{s}, and $\Expectation[\rvV] = \sum_{\kb \in \T} 1-\Probability(A^{c}_{\kb})$  the expectation of $\rvV$, where $\Probability(A^{c}_{\kb})$ is the probability of $A^{c}_{\kb}$.
    Let $A^{c}_{\kb,i}$ be the event that $\sse_{i} \in \se_{\kb}$ is not in $\Y_{i}$, i.e. the \nth{i} coordinate of the \nth{t} \codeword is not in the set of preambles detected at \nth{i} \rasubframe. 
    Since an element $\se_{\kb} \in \E$ is in $\IHSymbol{\Y}$ \myiff all the vertices of $\se_{\kb}$ are in $\Y$, $A_{\kb}=\bigcap_{i=1}^{n} A_{\kb,i}$ and $A^{c}_{\kb}=\bigcup_{i=1}^{n} A^{c}_{\kb,i}$. By applying the inclusion-exclusion principle, %$\Probability(A^{c}_{\kb})$ can be calculated as
    %\begin{equation}\label{eq:p_covered_edge_general}
    $%
    \Probability\left(A^{c}_{\kb}=\bigcup_{i=1}^{n} A^{c}_{\kb,i}\right)= \sum_{\emptyset \neq \Q\subseteq\{1,2,\cdots,n\}}  (-1)^{|\Q|-1} \Probability\left(\bigcap_{i \in \Q} A^{c}_{\kb,i}\right).%\,
    $
    For the case $n=2$, %(\ref{eq:p_covered_edge_general}) reduces to % 
    $\Probability(A^{c}_{\kb})=\Probability\left(A^{c}_{\kb,1}\right)+\Probability\left(A^{c}_{\kb,2}\right)-\Probability\left(A^{c}_{\kb,1}{\cap}A^{c}_{\kb,2}\right)$.  Moreover, if $\xij{i}{j} = \nicefrac{1}{\rs}$, $\forall i,j$%$i=1,2$ and $j = 1,2,{\cdots},\rs$
    , then, we have $M =\rs \cdot \ka$, and, $\forall \kb,i$,
    %\begin{align}
    $\Probability\left(A^{c}_{\kb,i}\right)=\left(1-\nicefrac{\ka}{\M}\right)^{\Nb}$ %\left(1-\frac{1}{r}\right)^{\Nb}
    %, & 
    and $\Probability\left(%\cap_{i \in \{1,2\}} A^{c}_{\kb,i}\right
    A^{c}_{\kb,1}{\cap}A^{c}_{\kb,2}\right)=\left(1-\nicefrac{(2\ka-1)}{\M}\right)^{\Nb}$.
        %,\forall \kb,i\,
    %\end{align}
    %= \fOneNTwo-\fTwo$,
    Note that $\ka$ and $2\ka-1$ are the number of \codeword{s} in $\code$ that satisfy the complement of the event for which probability has been calculated by considering a single device. 
    Theorem \ref{th:e_v_formula} follows after inserting  $\Probability(A^{c}_{\kb})$ into $\Expectation[\rvV]$, summing up over all $\kb \in \T$, and substituting $k$ by $\nicefrac{\M}{\rs}$.
    \end{proof}
    
    Finally, let $\rvU$ be the \resourceefficiency defined as 
    $\rvU=\nicefrac{\rvA}{\rvG},$
    where $\rvA$ is the number of grants actually used and $\rvG$ is the number of uplink grants issued for RA devices in a \superframe;
    $\Expectation[\rvU]=\Expectation[\nicefrac{\rvA}{\rvG}]=\Expectation[\nicefrac{\rvC\cdot\Palloc}{\rvV\cdot\Palloc}]%
    \approx \nicefrac{\Expectation[\rvC]}{\Expectation[\rvV]},$ 
    where $\rvC$ is the number of grants actually needed. Thus, $\Expectation[\rvC]$ equals $\NC$ for both the \CeRATwo and  \MyCeRA schemes. 
    
    \RevOne{
    \section{Practical Considerations}%
    \label{sec:practial_example}
    The  \MyCeRA scheme can be applied to existing technologies, e.g. 5G NR, NB-IoT, LTE-M, LTE-A, and LTE, in a straightforward way. All these technologies use  system information broadcast (SIB) messages to dynamically configure the RA channel (RACH) at some subframes (\rasubframe{s}) by setting the value of  the PRACH configuration index parameter as well as  allocating $q$ preambles for contention-based \ac{RA}. 
    
    The values of $n$, $q$, and $k$ need to be set to configure the \MyCeRA scheme, but caution needs to be taken since both energy consumption and access delay may increase with the  value of $n$. When $n=2$, up to $64^2$ \codeword{s} can be provided which is greater than the available control/data resources in the aforementioned technologies. 
    Thus, the value of $n$ can be fixed to $2$. Moreover, this also helps to reduce the signaling overhead. 
    While the value of $n$ can be preconfigured, the value of $k$  needs to be added to an SIB message. 
    
    For using the analytical model in Section \ref{sec:analytical_model}, the value of $K$ is needed, which can be estimated at the \ac{BS} by using existing approaches based on idle and detected preambles. In this way, $K$ can be regarded as the number of devices contending with a codeword whose preambles are detected in a superframe.}%    

\section{Numerical Results}
\label{sec:Results}
    This section compares the \CeRATwo \cite{Vural_CL20187-7908954} (hereinafter referred to as \CeRA) and \MyCeRA schemes. 
    The proposed analytical model is also validated by comparing the analytical and simulation results. 
    Results are presented as a function of $\M$\footnote{\RevOne{$\M$ equals $\rl^\BigL$ and $\ql {\cdot} \ka$ for the \CeRA and \MyCeRA schemes, respectively. To assign values to $\M$ in Fig. \ref{fig:performance_evaluation_nonadaptive}, we used $\rl \in \{8,9,\cdots,23\}$ and $\ka \in \{1,2,\cdots,8\}$.}} and $\Nb$ for $q=64$, $n=2$, and $\R= 100$.\footnote{Considering an LTE-M network with a 2-\subframe \rasuperframe every $10$ \subframe{s}, and $6$ uplink resources available per \subframe not being a \rasubframe, $\R=(10-2){\cdot}6=48$. Moreover, considering a \ac{NOMA} technique with  100\%-overloading \cite{Liang_7929338}, we obtain roughly $100$ uplink resources per \rasuperframe, which is a realistic value for $\R$ in a \ac{CIoT} network. Higher values of $R$, however, are possible by decreasing the frequency of the RA subframes.}
%**********************************************************
%                 Success Probability
%              Comparison CeRA and MyCeRA
%******** ********************** **************************
    The proposed \MyCeRA scheme outperforms the \CeRA scheme for all device loads and code sizes considering $P_{S}$ (Fig.~\ref{fig:mycera_ps}), $\Expectation[\rvV]$ (Fig.~\ref{fig:mycera_valid}) and $\Expectation[\rvU]$ (Fig.~\ref{fig:mycera_putil}), and achieves more than $50\%$, $100\%$ and $300\%$ performance gain, respectively. 
    
%**********************************************************
%The relationship between Ps and the code ambiguity problem
%******** ********************** **************************
    When $\Palloc^{\CeRA}=1$ ($\Expectation[\rvV] \leq \R$), the two schemes produce the same $P_{S}$ values (Fig. \ref{fig:mycera_ps}), but the granted resources are utilized very inefficiently when the \CeRA scheme is employed (Fig.~\ref{fig:mycera_putil}). 
%**********************************************************
%Justification of the importance of resources utilization
%******** ********************** **************************
    %Given that  radio resources are limited, it is quite important to use them efficiently. 
    Moreover, a cellular network typically has two groups of devices: random access-based devices and scheduling-based devices (e.g. based on semi-persistent or dynamic scheduling). Hence, the low grant utilization of the \CeRA scheme may significantly affect the performance of both groups of \device{s} in a coexistence scenario.

%**********************************************************
%                   Resource Utilization
%******** ********************** **************************
    Furthermore, when $\Palloc<1$ ($\Expectation[\rvV] > \R$) in each scheme, its $P_{S}$ value decreases as $\M$ increases even though $P_{N}$ increases.
    This situation occurs because the number of \validcw{s} is larger than the available resources (Fig.~\ref{fig:mycera_valid}), forcing the \network to randomly allocate the available resources among the \validcw{s}. Thus, some allocated resources are not actually used by any device while a certain portion of the devices waiting for an uplink grant does not receive response, even though their preamble transmission are correctly received at each \rasubframe. 
    This may strongly impact on the performance of \schemetype \ac{RA} in a limited resource regime. 
    However, for a given number of devices, this happens with a higher value of $\M$ when \MyCeRA scheme is used, allowing \MyCeRA scheme to support higher $P_{S}$ values than does the \CeRA scheme. 
    
    All of these gains are achieved thanks to the proposed optimized \MyCeRA code. It significantly reduces the number of \codeword{s} that the \network infers, effectively alleviating the \codeambiguityproblem. Moreover, these results show that the \MyCeRA scheme, as well as \CeRA and other RA schemes (\cite{Thomsen_TETT2013,Vural_CL20187-7908954}), have an optimal point of operation which defines the system parameter values (e.g. $\M$) and depends on the device load ($\Nb$). Such an optimal point can be derived by using the analytical model  proposed here, which is in full agreement with Monte Carlo simulation results ($10^5$ iterations).

\section{Conclusion}
\label{sec:Conclusion}
\RevOne{This letter has addressed the \codeambiguityproblem in \schemetype \ac{RA} for \ac{mMTC} by allowing devices to select codewords from a novel code with maximum average distance. 
The proposed scheme reduces the number of valid codewords that can be inferred, and greatly increases the \ac{RA} success probability as well as the efficiency in resource utilization. 
The proposed maximum average distance code is likely to find other innovative applications.}
\bibliographystyle{IEEEtran}
\bibliography{Fonseca_WCL2021-0549}
\end{document}